\newtheorem{proposition}{Proposition}
\Huge\title{A Probabilistic Framework for Estimating the Modal Age at Death}
\author{%
  \begin{minipage}[t]{0.4\textwidth}
    \centering
    Silvio C. Patricio\\
    \vspace{-7pt}
    \small\texttt{silviocabralpatricio@msn.com} 
  \end{minipage}%
  \hspace{-0.05\textwidth}%
  \begin{minipage}[t]{0.4\textwidth}
    \centering
    Paola Vazquez-Castillo\\
    \vspace{-7pt}
    \small\texttt{pavaz@sdu.dk}
  \end{minipage}\\
  \vspace{-5pt}
  \normalsize\textit{Interdisciplinary Center on Population Dynamics\\ University of Southern Denmark}\\
  \footnotesize\textit{Campusvej 55, 5230 Odense M, Denmark}
}
\date{}
\begin{document}
\maketitle


\begin{abstract} 
    \noindent The modal age at death is an increasingly used measure for understanding longevity and mortality patterns. However, existing estimation methods focus on point estimates, overlooking the inherent variability and uncertainty in mortality data. This study addresses this gap by introducing a probabilistic framework for estimating the probability distribution of the modal age at death. Using a multinomial model for age-specific death counts and leveraging a Gaussian approximation, our methodology captures variability while aligning with the discrete nature of mortality data. Empirical examples are based on mortality data from six different countries. By quantifying uncertainty around the modal age at death and improving robustness to data fluctuations, this approach offers valuable insights for demographic research and policy planning.\\
    \textbf{Keywords:} Modal age at death, probabilistic framework, multinomial distribution, demographic analysis, mortality patterns
\end{abstract}

\section{Context and Motivation}
Life expectancy at birth has long been the standard summary measure of mortality. It condenses the entire life table into a single number that is convenient for comparison across time and populations \citep{preston2000demography}. Yet, despite its apparent simplicity, its interpretation is not straightforward. In a period life table, life expectancy reflects mortality conditions observed during a specific calendar year rather than the actual lifespan of any population \citep{bongaarts2006long}. Moreover, it is highly sensitive to changes in mortality at younger ages and aggregates the influence of mortality across all ages. As survival improves at older ages, this averaging can mask where mortality improvements actually occur and how they reshape the distribution of deaths \citep{fries2002aging, wilmoth1999rectangularization}.

The \emph{modal age at death} offers a more robust alternative. It identifies the age at which deaths are most concentrated, summarizing the prevailing location of adult mortality \citep{bergeron2015decomposing, canudas2008modal}. Unlike life expectancy, the mode is largely unaffected by infant or early-adult mortality and is driven mainly by changes in survival at older ages. It therefore provides a clearer picture of longevity improvements associated with the postponement of deaths to later life. Accurate estimation of the mode is crucial for studying aging patterns, lifespan variation, and the dynamics of population aging. Yet, obtaining precise estimates is not straightforward, as most existing methods either estimate the mode from fitted models or rely on specific mathematical properties of the mode itself \citep{kannisto2001mode, horiuchi2013modal, vazquez2024longevity}.

A variety of methods have been developed to estimate the mode. Parametric models, such as the Gompertz, Weibull, and logistic models, estimate the mode analytically from mortality curves \citep{kannisto2001mode, missov2015gompertz}. These models provide insight into adult mortality dynamics but rely on assumed functional forms that may not capture the empirical age-at-death distribution fully.

Nonparametric approaches have been developed to address these limitations. One class of methods uses smoothing techniques—typically penalized splines—to estimate mortality rates and then determine the modal age from the resulting smooth surface \citep{ouellette2011changes}. Another set of approaches derives the mode directly from the life table’s age-at-death distribution. \citet{kannisto2001mode} proposed a quadratic interpolation to refine the discrete life-table mode, while \citet{vazquez2024longevity} recently introduced the discretized derivative test (DDT), which identifies the mode as the age that satisfies its defining mathematical properties. 

Across both parametric and nonparametric frameworks, most methods focus on modeling the mortality curve, from which the modal age at death is subsequently derived \citep[see, e.g.,][]{booth2008mortality}. In all cases, these approaches yield a \emph{single} point estimate of the mode, and any assessment of uncertainty must be obtained indirectly—through resampling procedures such as bootstrapping or, when the model’s structure permits, by applying the delta method to estimated parameters \citep{andreev2010spreadsheet, keyfitz2005applied}. As a result, uncertainty is quantified only approximately, through model-based propagation, rather than derived directly from the stochastic properties of death counts themselves.

A more direct approach is to treat the mode as a stochastic quantity in its own right, rather than as a by-product of a mortality model. We therefore estimate the \emph{probability distribution} of the modal age at death itself, from which the mode naturally emerges as a property of that distribution. This probabilistic formulation quantifies uncertainty directly and aligns with the discrete nature of mortality data.

Our approach builds on the Poisson-based framework for analysing death counts introduced by \citet{brillinger1986biometrics}. A key property of this framework is that the vector of independent Poisson counts, when conditioned on its sum, is equivalent to a single multinomial distribution, with deaths distributed across age categories according to their underlying probabilities. In this setting, we treat age-specific death counts as outcomes of a multinomial experiment conditional on the total number of deaths. This is not a model of the death counts themselves but a probabilistic description that captures the dependence among age-specific counts and provides a coherent stochastic basis for inference on the mode.

We propose a simple and computationally efficient method to estimate the empirical probability distribution of the modal age at death. The approach explicitly incorporates variability in death counts—an aspect often overlooked by traditional methods—and can be approximated by a Gaussian distribution. 

Unlike conventional estimations that yield a single modal value, our framework quantifies the uncertainty surrounding it. By shifting the analytical focus from estimating a mode to estimating its distribution, we provide a probabilistic and internally consistent account of the modal age at death—bridging demographic intuition with statistical rigor.

\section{Probabilistic Approach for Modal Age Estimation}

In the standard stochastic models, age-specific deaths are often assumed as independent Poisson random variables \citep{brillinger1986biometrics}. When the total number of deaths $N$ is fixed—as in life-table populations—this setting naturally leads to a multinomial representation.

The connection between the Poisson and multinomial frameworks stems from the properties of the Poisson distribution. If death counts in each age interval are modeled as independent Poisson random variables, their sum also follows a Poisson distribution. When the total number of deaths is fixed, the distribution of deaths across age intervals can be interpreted as a multinomial allocation. Here, this fixed total $N$ is distributed across age categories according to probabilities that represent the expected proportions of deaths in each category \citep[e.g.,][]{bishop2007discrete}. This multinomial framework aligns well with the categorical nature of age-specific mortality data.

\subsection{Multinomial Model for Mortality Data}
Let $K$ denote the total number of distinct age intervals (or age categories) under consideration. For each age interval $x = 1, 2, \dots, K$, let $D_x$ represent the count of deaths occurring in the age interval $[x, x+1)$. The vector of death counts is defined as $ \bm{D} = (D_1, D_2, \dots, D_K)^\top $, and the total number of deaths across all age intervals is given by $n = D_1 + D_2 + \cdots + D_K$.

Assume that each death is an independent trial resulting in one of the $K$ age categories with probabilities $\bm{p} = (p_1, p_2, \dots, p_K)^\top$. In this case, the vector $\bm{D}$ follows a multinomial distribution:

\begin{equation*}
\bm{D}|N=n \sim \text{Multinomial}(n, \bm{p}),
\end{equation*}
where $p_x$ is the probability that a death occurs in age interval $x$. These probabilities satisfy the condition $p_1 + p_2 + \cdots + p_K = 1$ and are typically estimated using life table methods. Life tables adjust for age-specific exposures and mortality rates to derive these probabilities.

The multinomial distribution is appropriate for scenarios where there is a fixed number of independent trials, each resulting in one of several outcomes, and the probabilities for these outcomes remain constant \citep[e.g.,][]{bishop2007discrete}. In the context of mortality data, each death is treated as an independent trial assigned to one of $K$ age categories. Although the assumption of independence may not hold in all cases, such as during epidemics or in socially interconnected groups, it is a reasonable approximation for large populations and over short time periods. Age categories are mutually exclusive and collectively exhaustive, ensuring that every death is assigned to exactly one interval and that all possible intervals are included in the analysis.

The probabilities $\bm{p} = (p_1, p_2, \dots, p_K)^\top$ are derived using a life table approach. Each $p_x$ represents the proportion of individuals who die during the age interval $x$ and is calculated based on exposure and mortality rates. In life tables, $\bm{p}$ is directly proportional to $\bm{d} = (d_1, d_2, \dots, d_K)^\top$, the hypothetical or standardized number of deaths occurring in each age interval. Both $\bm{p}$ and $\bm{d}$ describe the distribution of deaths, but $\bm{p}$ is scaled to ensure that the probabilities sum to one. This scaling reflects the multinomial nature of $\bm{p}$, where the probabilities remain constant throughout the observed period, implying stable age-specific mortality rates within each interval.

While individual lifetimes (trials) are assumed to be independent, the counts across age categories $\bm{D}$ are dependent because they sum to the fixed total $n$. An increase in deaths in one category requires a decrease in others to maintain the total number of deaths, which is fixed. This dependency is an inherent property of the multinomial distribution and is reflected in the covariance structure of the counts \citep[e.g.,][]{feller1991introduction}.

By modeling death counts using the multinomial distribution, the discrete and categorical nature of age-specific mortality data is captured. This approach accounts for the fixed total number of deaths $n$ and incorporates the dependency among age categories due to the constraint $D_1 + D_2 + \cdots + D_K = n$. It also reflects variability in death counts caused by random fluctuations in mortality, providing a probabilistic framework for analyzing the distribution of age-specific deaths.

\subsection{Gaussian Approximation for Large Populations}

For large values of $n$, the multinomial distribution can be approximated by a multivariate normal distribution. This result follows from the multivariate central limit theorem \citep[e.g.,][]{muirhead2009aspects}. The Gaussian approximation simplifies computations and enables analytical methods for estimating probabilities that might be complex to calculate exactly, particularly when the number of age categories $K$ is large.

The central limit theorem states that as $n$ becomes very large, properly normalized sums of independent random variables converge to a normal distribution. In the context of the multinomial distribution, when $n$ is large and each expected count $n p_x$ is sufficiently large, the distribution of the death counts $\bm{D}$ can be well-approximated by a multivariate normal distribution.

Under these conditions, $\bm{D}$ can be expressed as:

\begin{equation*}
\bm{D} \approx \mathcal{N}(\bm{\mu}, \bm{\Sigma}),
\end{equation*}
where $\bm{\mu} = n \bm{p}$ is the mean vector. This represents the expected number of deaths in each age interval. The covariance matrix is given by $\bm{\Sigma} = n \left( \text{Diag}(\bm{p}) - \bm{p} \bm{p}^\top \right)$, which captures the variances and covariances between age categories.

The requirement for a large $n$ ensures that the normal approximation is accurate. Additionally, it is essential that each expected count $n p_x$ is sufficiently large to avoid issues related to skewness or discreteness that could affect the approximation. If any $p_x$ is very small, the corresponding component of $\bm{D}$ may not be well-approximated by a normal distribution.

The covariance matrix $\bm{\Sigma}$ highlights the negative correlations between counts in different age categories. These correlations arise because the total number of deaths $n$ is fixed; an increase in deaths in one age category necessitates a decrease in another to maintain the total. By using the Gaussian approximation, computations and estimations become more tractable, enabling the analysis of the probability distribution of the modal age at death with greater ease.
\subsection{Probabilistic Estimation of the Modal Age}

Our goal is to estimate the probability that age $x$ is the modal age, which is defined as the age with the highest death count. Using the Gaussian approximation, this problem can be addressed analytically or through computational methods.

Let $M$ represent the modal age at death. By definition, the death count at age $M$ is the maximum among all age categories. This means that $\mathbb{P}(D_M \geq D_x) = 1$ for all $x = 1, \dots, K$. In the discrete setting of age intervals, calculating this probability is more manageable compared to a continuous framework.

We aim to compute $\mathbb{P}(M = x) = \mathbb{P}\left( D_x = \max\{D_1, D_2, \dots, D_K\} \right)$, which represents the probability that $D_x$ is the largest component of the death count vector $\bm{D}$. This probability can be expressed as:

\begin{equation*}
    \mathbb{P}(M = x) = \mathbb{P}(D_x - D_j \geq 0 \text{ for all } j \neq x).
\end{equation*}
Since $\bm{D}$ follows a multivariate Gaussian distribution with mean vector $\bm{\mu}$ and covariance matrix $\bm{\Sigma}$, any linear combination of its components is also normally distributed. Specifically, if we define the differences $Y_{x,j} = D_x - D_j$ for each $j \neq x$, then $Y_{x,j}$ is a normally distributed random variable with the following mean and variance:

\begin{equation*}
    \mu_{x,j} = \mu_x - \mu_j, \quad \sigma_{x,j}^2 = \sigma_{xx} + \sigma_{jj} - 2\sigma_{xj},
\end{equation*}
where $\sigma_{ij}$ denotes the covariance between $D_i$ and $D_j$.

The set of differences $\bm{Y}_x = \{ Y_{x,j} : j \neq x \}$ are jointly normally distributed random variables. We are interested in the probability that all these differences are non-negative, which corresponds to $D_x$ being greater than or equal to all $D_j$ for $j \neq x$.

These differences can be collectively represented as:
\begin{equation*}
    \bm{Y}_x \sim \mathcal{N}(\bm{\Lambda}_x \bm{\mu}, \bm{\Lambda}_x \bm{\Sigma} \bm{\Lambda}_x^\top),
\end{equation*}
where $\bm{\Lambda}_x$ is a transformation matrix that maps the mean vector $\bm{\mu}$ of $\bm{D}$ to the mean of $\bm{Y}_x$ while capturing the difference structure. The matrix $\bm{\Lambda}_x$ has dimensions $(K - 1) \times K$, with each row corresponding to the difference $D_x - D_j$. The entries of $\bm{\Lambda}_x$ are defined as:

\begin{equation*}
    \lambda_{x,i,k} = \mathbb{I}(k = x) - \mathbb{I}(k = j_i),
\end{equation*}
where $j_i$ denotes the $i$-th index not equal to $x$, and $\mathbb{I}(\cdot)$ is the indicator function.

For each age $x$, the probability that $x$ is the modal age is then given by:
\begin{equation}
    \mathbb{P}(M = x) = \int_{R_x} f_{\bm{Y}_x}(\bm{y}_x) \, d\bm{y}_x,
    \label{eq:prob_mode}
\end{equation}
where $f_{\bm{Y}_x}(\bm{y}_x)$ is the joint multivariate normal density function of $\bm{Y}_x$, and the integration region $R_x$ is defined as:
\begin{equation*}
    R_x = \{ \bm{y}_x \in \mathbb{R}^{K - 1} \mid Y_{x,j} \geq 0 \text{ for all } j \neq x \}.
\end{equation*}

Evaluating this integral requires integrating over a $(K - 1)$-dimensional space, with boundaries determined by the inequalities $Y_{x,j} \geq 0$. As $K$ increases, the dimensionality of the integration space grows, making exact computation increasingly challenging.

Methods for calculating multivariate normal probabilities over such regions include numerical integration, orthant probabilities, and approximation techniques \citep{genz2009computation}. However, as the number of dimensions increases, these methods become computationally intensive. In cases where $K$ is large, such as when analyzing mortality data with many age intervals, exact computations may not be practical.

To address this, computational methods can approximate these probabilities. Software packages like \texttt{mvtnorm} in R provide functions for numerically computing multivariate normal probabilities \citep{mi2009mvtnorm}. By leveraging such tools, we can estimate $\mathbb{P}(M = x)$ without performing the high-dimensional integral analytically. This approach allows us to handle larger values of $K$ and obtain accurate estimates of the probability distribution of the modal age at death.

Algorithm~\ref{alg:estimation} provides an overview of the procedure for estimating the probability distribution of the modal age at death. This method involves calculating age-specific probabilities and identifying the modal age from the death distribution. By following this approach, we obtain an empirical estimate of the modal probabilities. The algorithm is designed to ensure that the estimated probabilities converge as the total number of deaths $n$ increases.

\begin{algorithm}[htb!]
\caption{Mode Distribution Estimation Algorithm}
\label{alg:estimation}
\small{
  \SetAlgoLined
  \KwIn{$\bm{D}$: Vector of death counts for each age interval, $\bm{E}$: Vector of exposures for each age interval, $\bm{x}$: Vector of age intervals}
  \KwOut{Data frame with mode, probability for each mode, and cumulative probability}
  
  Compute total deaths $n \leftarrow \sum_i D_i$ \;
  Calculate $\bm{d}$ using life table methods for a cohort of size 1 \;
  Set $\bm{p} \leftarrow \bm{d}$ \;

  Define mean vector $\bm{\mu} \leftarrow n \cdot \bm{p}$ and covariance matrix $\bm{\Sigma} \leftarrow n \cdot (\text{diag}(\bm{p}) - \bm{p} \cdot \bm{p}^\top)$ \;
  \For{\textbf{each age interval} $k \in \bm{x}$}{
    Construct difference matrix $\bm{\Lambda}_k$ for comparisons between $X_k$ and $X_j$, $j \neq k$ \;
    Compute covariance matrix of differences $\bm{\Sigma}_k \leftarrow \bm{\Lambda}_k \cdot \bm{\Sigma} \cdot \bm{\Lambda}_k^\top$ \;
    Compute mean vector $\bm{\mu}_k \leftarrow \bm{\Lambda}_k \cdot \bm{\mu}$ \;
    Calculate $ \mathbb{P}(M = k) $ by solving the integral in Equation \ref{eq:prob_mode} and store it\footnotemark\; 
  }
  \Return{The estimated discrete probability density $\mathbb{P}(M = k)$}
}
\end{algorithm}
\footnotetext{If calculating $ \mathbb{P}(M = k) $ through numerical methods, normalize the estimated discrete probability density function to ensure that $ \sum_k \mathbb{P}(M = k) = 1 $} 

In Algorithm~\ref{alg:estimation}, the vector $\bm{d}$ is computed using life table methods for a normalized cohort of size 1, following the approach outlined by \citet{canudas2008modal}. In this context, $\bm{d}$ represents the proportions of individuals who die within each age interval. Normalizing the cohort size ensures that all probabilities and rates are expressed as fractions of the initial cohort, providing a standardized framework for analysis \citep[see, e.g., ][]{preston2000demography}.

To ensure the estimated distribution is a valid probability measure, the probabilities are adjusted to sum to 1. This step corrects small inaccuracies from numerical methods, such as \texttt{pmvnorm} in the \texttt{mvtnorm} package in R, where approximation errors can arise. Normalization preserves the relative proportions of the probabilities while ensuring the distribution remains valid. The code is publicly available at
\url{https://github.com/scpatricio/mode_estimation/}.

\subsection{Convergence of the Modal Distribution as $ n $ Increases}

An important property of our probabilistic framework is that as the total number of deaths $n$ increases, the probability distribution of the modal age at death converges to a point mass at the age interval with the highest underlying death probability. This result follows from the definition of the modal age at death as the age where the number of deaths is the highest. Since the age-specific death probabilities represent the expected proportion of deaths in each age interval, the interval containing the real modal age is naturally expected to have the maximum value of $p_x$.

It is crucial, however, to distinguish between the real modal age, which is a precise value, and the interval containing it. While the real modal age may not perfectly match the midpoint of the interval, the discrete framework requires assigning it to a specific interval. In this context, the interval with the highest $p_x$ serves as a probabilistic approximation. The value of $p_x$ reflects the expected relative frequency of deaths within each interval. As $n$ increases, this approximation becomes more accurate, and the interval with the highest $p_x$ converges to the true modal interval. Thus, $p_x$ provides a robust proxy for identifying the modal age.

We formalize this result in the following proposition.
\begin{proposition}
Let $x^\star$ denote the age interval with the highest probability $p_{x^\star}$, and suppose that $p_{x^\star} > p_j$ for all $j \neq x^\star$. As the total number of deaths $n$ increases, the probability distribution of the modal age at death converges to a point mass at the age interval $x^\star$ with the highest underlying death probability $p_{x^\star}$, i.e.,
\[
  \lim_{n \to \infty} \mathbb{P}(M = x^\star) = 1,
\]
where $M$ is the modal age at death. This result implies that the real modal age falls within the interval $x^\star$ as $n$ becomes large.
\end{proposition}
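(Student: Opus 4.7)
The plan is to reduce the event $\{M = x^\star\}$ to the intersection of the finitely many pairwise comparisons $\{D_{x^\star} \geq D_j\}$, $j \neq x^\star$, argue that each such comparison holds with probability approaching $1$, and close with a union bound. The key mechanism is that under the multinomial model each pairwise difference $D_{x^\star} - D_j$ has deterministic drift of order $n$ but standard deviation of order $\sqrt{n}$, so the probability that it takes a negative value is negligible for large $n$.

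Concretely, I would first write
\[
\mathbb{P}(M = x^\star) \;=\; \mathbb{P}\Bigl(\bigcap_{j \neq x^\star}\{D_{x^\star} \geq D_j\}\Bigr) \;\geq\; 1 - \sum_{j \neq x^\star} \mathbb{P}(D_{x^\star} - D_j < 0),
\]
which reduces the proposition to controlling $\mathbb{P}(Y_j < 0)$ for each of the $K-1$ differences $Y_j := D_{x^\star} - D_j$.

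Next, I would compute $\mathbb{E}[Y_j] = n(p_{x^\star} - p_j) > 0$ and bound $\operatorname{Var}(Y_j)$ from above by a quantity of order $n$, using the multinomial covariance structure $\operatorname{Cov}(D_i, D_k) = -n p_i p_k$ already recorded in the covariance matrix $\bm{\Sigma}$. A single application of Chebyshev's inequality to the event $\{|Y_j - \mathbb{E} Y_j| \geq n(p_{x^\star} - p_j)\}$ then yields $\mathbb{P}(Y_j < 0) = O(1/n)$ for each fixed $j$. Alternatively, the Gaussian approximation $Y_j \approx \mathcal{N}(n(p_{x^\star}-p_j), \, O(n))$ developed in the previous subsection shows that the standardized threshold at $0$ diverges to $-\infty$; either route gives the same conclusion. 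Summing the $K-1$ vanishing bounds in the display above then forces $\mathbb{P}(M = x^\star) \to 1$.

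I do not foresee any deep obstacle here: the argument is essentially a second-moment estimate combined with a finite union bound. The one subtlety worth flagging explicitly in the write-up is that the strict uniqueness hypothesis $p_{x^\star} > p_j$ for all $j \neq x^\star$ cannot be relaxed; ties among the maximal probabilities would split the limiting mass across several intervals and prevent the point-mass conclusion. It is also worth remarking that the argument is non-asymptotic enough to deliver an explicit rate (the convergence is at least $O(1/n)$ with constant depending on the smallest gap $\min_{j \neq x^\star}(p_{x^\star} - p_j)$), which could be noted in passing even though the proposition only asserts the limiting statement.
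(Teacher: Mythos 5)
Your argument is correct and rests on the same underlying mechanism as the paper's proof (concentration of the counts around their means plus a finite union bound), but the decomposition is genuinely different. The paper works coordinate-wise: it invokes the Weak Law of Large Numbers for each proportion $D_j/n$, fixes a gap $\delta = \min_{j\neq x^\star}(p_{x^\star}-p_j)$, and shows that on the intersection of the events $\{|D_j/n - p_j| < \delta/3\}$ one necessarily has $D_{x^\star} > D_j$ for all $j$. You instead apply Chebyshev directly to each pairwise difference $Y_j = D_{x^\star} - D_j$, exploiting the multinomial covariance $\operatorname{Cov}(D_{x^\star},D_j) = -n p_{x^\star} p_j$ to get $\operatorname{Var}(Y_j) = O(n)$ against a drift of order $n$. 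Your route is slightly more economical (one concentration bound per comparison rather than a two-sided bound per coordinate plus a separate algebraic step) and, as you note, it delivers an explicit $O(1/n)$ rate with a constant governed by the smallest gap, which the paper's qualitative $\epsilon$--$N$ argument does not state. One small point of care: you write the target event as $\bigcap_{j\neq x^\star}\{D_{x^\star} \geq D_j\}$, which only makes $D_{x^\star}$ \emph{a} maximizer; to conclude $M = x^\star$ without a tie-breaking convention you should use the strict inequalities $\{D_{x^\star} > D_j\}$, as the paper does. This costs nothing, since the same Chebyshev bound controls $\mathbb{P}(Y_j \leq 0)$ as well. Your remark that strict uniqueness of $p_{x^\star}$ is essential is apt and matches the hypothesis of the proposition.
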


\begin{proof}
Consider the random variables $ D_x $ representing the number of deaths in age interval $ x $. Under the multinomial distribution, the counts $ D_x $ satisfy:
  \begin{equation*}
\mathbb{E}\left[ \frac{D_x}{n} \right] = p_x, \quad \text{Var}\left( \frac{D_x}{n} \right) = \frac{p_x(1 - p_x)}{n}.
\end{equation*}
By the Weak Law of Large Numbers, the sample proportions $ \frac{D_x}{n} $ converge in probability to the true probabilities $ p_x $ as $ n \to \infty $:
  \begin{equation*}
\frac{D_x}{n} \xrightarrow{\mathbb{P}} p_x.
\end{equation*}

At $x^\star$ we have that $p_{x^\star} = \max\{p_1, p_2, \dots, p_K\}$. Since $ p_{x^\star} > p_j $ for all $ j \neq x^\star $, there exists a positive constant $ \delta $ such that:
  \begin{equation*}
p_{x^\star} - p_j \geq \delta > 0, \quad \forall j \neq x^\star.
\end{equation*}
By the convergence in probability, for any $ \epsilon > 0 $ and sufficiently large $n$, there exists $ N $ such that for all $ n > N $:
  \begin{equation*}
\mathbb{P}\left( \left| \frac{D_{x^\star}}{n} - p_{x^\star} \right| < \frac{\delta}{3} \right) > 1 - \frac{\epsilon}{2},
\end{equation*}
and for each $ j \neq x^\star $:
  \begin{equation*}
\mathbb{P}\left( \left| \frac{D_j}{n} - p_j \right| < \frac{\delta}{3} \right) > 1 - \frac{\epsilon}{2K}.
\end{equation*}
By the union bound, the probability that all these events occur simultaneously is greater than $ 1 - \epsilon $:
  \begin{equation*}
\mathbb{P}\left( \left| \frac{D_{x^\star}}{n} - p_{x^\star} \right| < \frac{\delta}{3}, \, \bigcap_{j \neq x^\star} \left| \frac{D_j}{n} - p_j \right| < \frac{\delta}{3} \right) > 1 - \epsilon.
\end{equation*}

When these events occur, we have:
  \begin{equation*}
\frac{D_{x^\star}}{n} > p_{x^\star} - \frac{\delta}{3},
\end{equation*}
and for each $ j \neq x^\star $:
  \begin{equation*}
\frac{D_j}{n} < p_j + \frac{\delta}{3} \leq p_{x^\star} - \delta + \frac{\delta}{3} = p_{x^\star} - \frac{2}{3}\delta.
\end{equation*}
Therefore, for all $ j \neq x^\star $:
  \begin{equation*}
\frac{D_{x^\star}}{n} - \frac{D_j}{n} > \left( p_{x^\star} - \frac{\delta}{3} \right) - \left( p_{x^\star} - \frac{2}{3}\delta \right) = \frac{\delta}{3} > 0.
\end{equation*}
This implies that $ D_{x^\star} > D_j $ for all $ j \neq x^\star $. Thus,
\begin{equation*}
\mathbb{P}\left( D_{x^\star} > D_j \text{ for all } j \neq x^\star \right) > 1 - \epsilon.
\end{equation*}

Since $ \epsilon $ is arbitrary, we take $\epsilon \downarrow 0$ and conclude that:
  \begin{equation*}
\lim_{n \to \infty} \mathbb{P}\left( D_{x^\star} > D_j \text{ for all } j \neq x^\star \right) = 1.
\end{equation*}
Therefore,
\begin{equation*}
\lim_{n \to \infty} \mathbb{P}(M = x^\star) = 1.
\end{equation*}
\end{proof}

This result shows that as the total number of deaths $n$ becomes large, the modal age at death converges in probability to the age interval $x^\star$ with the highest death probability $p_{x^\star}$. As a consequence, the distribution of the mode becomes increasingly concentrated in the interval containing the true modal age, ensuring the consistency of our method within the multinomial framework. Unlike approaches that rely on the derivative of an estimated density function, our method avoids the challenges associated with identifying local maxima or dealing with saddle points.

For finite $n$, however, random fluctuations in the data introduce variability, resulting in uncertainty about the modal age. Estimating the probability distribution of the mode allows us to quantify this uncertainty and provides a clearer picture of the likelihood that each age interval contains the mode. This is especially valuable in practical scenarios where the differences between $p_{x^\star}$ and $p_j$ for $j \neq x^\star$ are relatively small.
\subsection{Extension to a Continuous Framework}

Although our approach is formulated within a discrete framework of age intervals, a similar methodology can be extended to a continuous setting. As the number of age intervals $K$ increases and the width of each interval decreases, the discrete age categories approach a continuous age variable. In the limit as $K \to \infty$, the multinomial distribution over discrete categories transitions to a continuous distribution over age \citep{seeger2004gaussian}.

In this continuous framework, death counts can be modeled using a Poisson point process or, under certain conditions, approximated by a Gaussian process via the central limit theorem \citep[e.g.,][]{daley2003introduction}. Cumulative death counts over age can then be viewed as a realization of a stochastic process with a continuous index set. Differences between age-specific death counts, such as $Y_{x,j} = D_x - D_j$, would correspond to increments of this process.

A key challenge in the continuous setting is defining the probability that a specific age $x$ is the mode. In a continuous distribution, the probability of observing any exact value is zero. Instead, we can analyze the probability density function of the age at death and study the behavior of the process around its maximum \citep[e.g.,][]{adler2009random}. Tools from stochastic process theory, including techniques for analyzing the maxima of Gaussian processes, are well-suited for this task.

Extending our method to the continuous framework involves modeling death counts as a Gaussian process over age and calculating the distribution of the age where the process reaches its maximum. While this introduces additional mathematical complexity, it provides a more precise representation of mortality patterns when age is treated as a continuous variable.

Practical implementation of this approach requires careful modeling of the covariance structure of the process and the use of computational methods for continuous Gaussian processes \citep[e.g.,][]{seeger2004gaussian}. However, in many applications, the discrete framework offers a sufficiently accurate approximation, particularly when age intervals are small. Nonetheless, exploring a continuous version of our method is a promising direction for future research. Advances in Gaussian process analysis and extreme value theory could further enhance this extension.

\section{Empirical Application}

To validate our methodology, we applied it to mortality data from six countries: Denmark, France, Italy, Japan, the Netherlands, and the United States (US), covering the period from 1960 to 2020, obtained from the \cite{hmd}. Female and male populations were analyzed separately to investigate gender differences in the modal age at death. This dataset provides a robust foundation for evaluating the applicability of our method to real-world scenarios, as it includes varying population sizes, cultural contexts, and mortality patterns across these nations.

The results, shown in Figure~\ref{fig:modal_age_new}, show an overall upward trend in the modal age at death from 1960 to 2020 across the six countries. This trend reflects consistent improvements in longevity over time. Additionally, we observe gender-specific differences, with females consistently exhibiting a higher modal age at death than males. This finding highlights the persistent gender gap in longevity.

\begin{figure}[htb!]
    \centering
    \includegraphics[width=\linewidth]{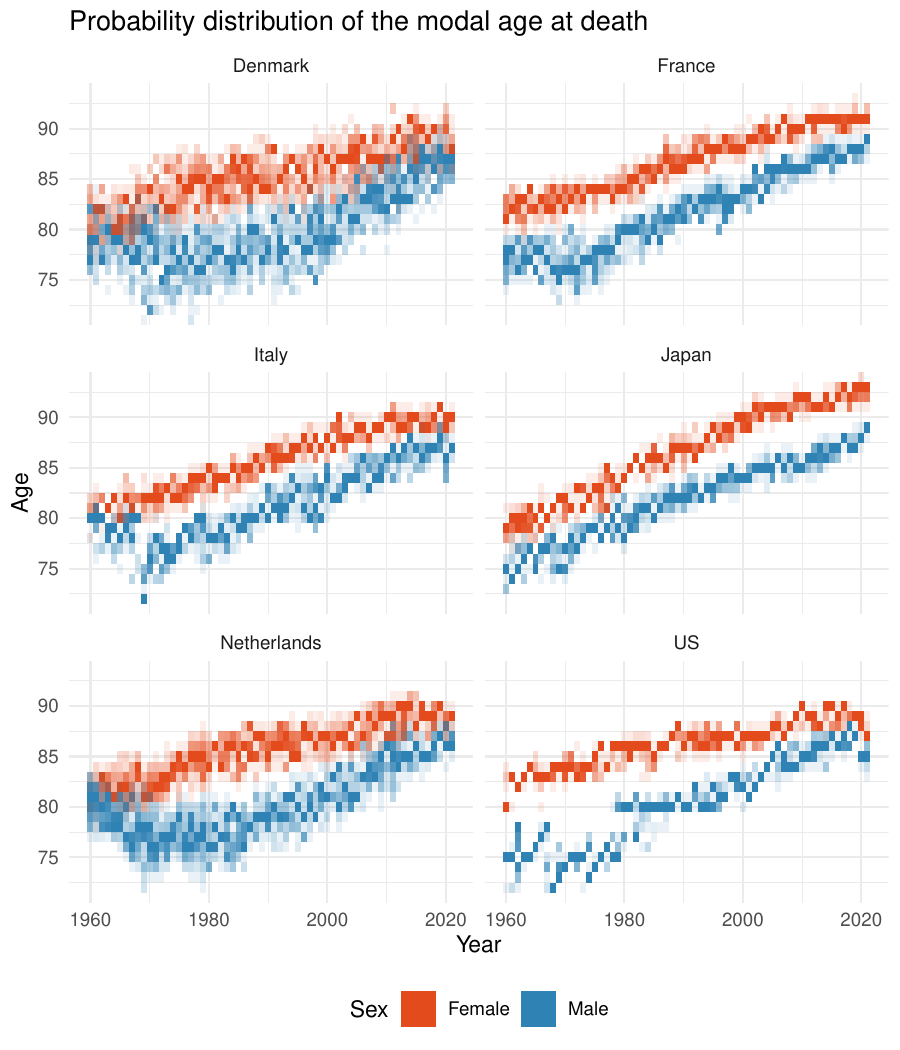}
    \caption{Modal age at death from 1960 to 2020 across six countries (Denmark, France, Italy, Japan, Netherlands, and the United States) by gender (female and male). The heatmap represents the probability distribution of the modal age at death for each year, with higher color intensities indicating greater probabilities of a specific age being the modal value. The upward trend reflects improvements in longevity, while broader distributions in Denmark and the Netherlands suggest greater variability, likely due to smaller populations. In contrast, Japan and Italy show narrower, more consistent distributions.}
    \label{fig:modal_age_new}
\end{figure}

Some fluctuations are visible in the distribution, with temporary plateaus or slight declines in the mode occurring during specific periods. For example, certain countries show slower progress or even minor declines in modal age at death during the late 1990s and early 2000s, possibly linked to region-specific socio-economic or health crises.

Among the countries analyzed, Japan stands out with a particularly high and steadily increasing modal age at death, reflecting the nation's advanced healthcare system and lifestyle factors. In contrast, the United States exhibits more variability, particularly in recent decades, which could be attributed to healthcare inequalities and the broader impact of external factors, such as the opioid crisis and the COVID-19 pandemic.

The figure also highlights variability in the distribution shapes. Broader distributions, particularly in Denmark and the Netherlands, point to greater statistical uncertainty, potentially tied to smaller population sizes. Meanwhile, narrower distributions in countries like Japan suggest more stable and consistent modal age estimates, highlighting the importance of considering variability when interpreting the distribution of the mode.

Overall, the application of this methodology to mortality data across diverse national contexts highlights its utility in capturing trends and disparities in the modal age at death. By showcasing the full probability distribution of the modal age at death, it provides better picture of the dynamics of longevity and its improvements across populations.

\section{Concluding Remarks and Future Directions}

We introduced a probabilistic framework for estimating the distribution of the modal age at death, utilizing the multinomial distribution and its Gaussian approximation. By modeling age-specific death counts as outcomes of a multinomial experiment and incorporating the dependency among counts due to the fixed total number of deaths, our method provides an empirical estimate of the probability distribution of the modal age. This approach aligns with the discrete nature of mortality data and addresses the variability often overlooked by traditional point-estimation methods.

The application of our methodology to high-quality mortality data from six countries highlighted its ability to capture longevity patterns, including temporal trends and gender differences in modal age. Additionally, the framework's robustness in reflecting the uncertainty surrounding the modal age offers a valuable tool for demographic analysis, especially in contexts with smaller populations or fluctuating mortality patterns.

Despite the strengths of our proposed framework, some limitations must be acknowledged. The Gaussian approximation assumes a large total number of deaths ($ n $) and smooth mortality distributions. In small populations or during mortality shocks (e.g., pandemics or sudden socio-economic disruptions), these assumptions may not hold, potentially leading to inaccuracies in the estimated probabilities. Data quality issues, such as age misreporting or incomplete records, can introduce biases in age-specific probabilities $ \bm{p} $, affecting the reliability of the estimated modal age distribution \citep{preston1999effects}. Dependencies in mortality due to shared environmental or social factors, such as clustering in epidemics, also challenge the model’s assumption of independent deaths \citep{leger2021can}.

To address these limitations, future research could explore methods that account for overdispersion or clustering, such as the Bell distribution, negative binomial models, or Bayesian approaches \citep{gelman1995bayesian,castellares2018bell}. For populations with irregular mortality patterns, approaches like functional clustering, as explored by \citet{leger2021can}, can help identify variability and refine estimates by capturing systematic differences in mortality profiles. These methods, in conjunction with techniques like bootstrapping or data quality adjustments, could mitigate challenges arising from irregular patterns or incomplete records. Integrating Monte Carlo methods or hierarchical models could enhance the model's robustness to mortality shocks and small population sizes. Additionally, extending the framework to a continuous setting could provide a more nuanced understanding of mortality dynamics. Gaussian processes and methods from extreme value theory may offer promising directions for analyzing the distribution of the mode in continuous age contexts \citep{ludkovski2018gaussian}.

This framework can also be extended to leverage the Human Cause-of-Death (HCD) series recently incorporated into the Human Mortality Database. By modeling age-specific death counts disaggregated by cause, the framework can estimate the probability distribution of the modal age at death for each specific cause. This extension would offer a clearer view of how different causes shape the distribution of deaths around the mode—both in terms of central tendency and variability. For instance, it would allow researchers to distinguish between causes that predominantly affect the timing of death (e.g., shifting the mode) and those that influence its dispersion. Such distinctions are crucial for understanding the dynamics underlying longevity improvements or stagnation, and for identifying the specific causes that contribute most to emerging inequalities in mortality.

In conclusion, our method introduces a significant innovation by providing an empirical framework for estimating the probability distribution of the modal age at death within a discrete context. By moving beyond traditional point estimates, this approach offers a probabilistic perspective that explicitly quantifies variability, reveals competing age intervals, and improves robustness to data fluctuations. By aligning with the categorical nature of mortality data, it enhances our ability to analyze and interpret mortality patterns across diverse contexts, offering valuable insights for a better understanding of population dynamics.

\section*{Acknowledgments}
We gratefully acknowledge the financial support from the AXA Research Fund through the funding for the “\textit{AXA Chair in Longevity Research}”. 


\bibliographystyle{apalike}
\footnotesize \bibliography{references}

%



  
  
  
  
  
  
  
  
  
  
  
  
  
    
    
    
    
  
  
  
  

\end{document}